\newtheorem{proposition}{Proposition}
\begin{document}

\title{Joint Resource Allocation in SWIPT-Based Multi-Antenna Decode-and-Forward Relay Networks}
\author{Yuan~Liu,~\IEEEmembership{Member,~IEEE}
\thanks{This paper was presented in part at the IEEE International Conference on Communications, Paris, France, May 2017 \cite{icc17conf}. }
\thanks{The author is with the School of Electronic and Information Engineering, South China University of Technology, Guangzhou 510641, China (e-mail: eeyliu@scut.edu.cn).}
}

\maketitle

\vspace{-1.5cm}

\begin{abstract}
In this paper, we consider relay-assisted simultaneous wireless information and power transfer (SWIPT) for two-hop cooperative transmission, where a half-duplex multi-antenna relay adopts decode-and-forward (DF) relaying strategy for information forwarding. The relay is assumed to be energy-free and needs to harvest energy from the source node.  By embedding power splitting (PS) at each relay antenna to coordinate the received energy and information,  joint problem of determining PS ratios and power allocation at the multi-antenna relay node is formulated to maximize the end-to-end achievable rate. We show that the multi-antenna relay is equivalent to a virtual single-antenna relay in such a SWIPT system, and the problem is optimally solved with closed-form. To reduce the hardware cost of the PS scheme, we further propose the antenna clustering scheme, where the multiple antennas at the relay are partitioned into two disjoint groups which are exclusively used for information decoding and energy harvesting, respectively. Optimal clustering algorithm is first proposed but with exponential complexity. Then a greedy clustering algorithms is introduced with linear complexity and approaching to the optimal performance. Several valuable insights are provided via theoretical analysis and simulation results.
\end{abstract}

\begin{keywords}
Simultaneous wireless information and power transfer (SWIPT), energy harvesting, decode-and-forward (DF) relaying, multi-antenna relay.
\end{keywords}

\section{Introduction}

Energy harvesting is a promising method to prolong the lifetime of energy-constrained wireless networks. Compared with conventional energy supplies such as batteries with fixed operation time, energy harvesting from surrounding environment potentially provides an immortal energy supply. However, the conventional  energy harvesting depends on natural energy sources (like solar, wind, vibration and so on), which can not be controlled and are not always available. Recently, radio-frequency (RF) signals radiated by transmitters provide self-sustainable and controllable energy source for wireless energy harvesting and thus attracts considerable research interests \cite{LuCST,ChenCM}. Since RF signals carry both energy and information at the same time, simultaneous wireless information and power transfer (SWIPT) processing enables new resource allocation schemes at transceivers and thus has drawn a significant attention in wireless communications.

The prior works  \cite{Varshney,Grover} investigated the fundamental ``energy-rate" tradeoff of  SWIPT, where however the receiver is able to perform information decoding and energy harvesting independently from the same received signal without any loss, which is not practically realizable yet due to the current circuit limitations as pointed out by \cite{ZhangHo,ZhouZhang}. Thus the authors of \cite{ZhangHo,ZhouZhang}  proposed two practical designs with separated information decoding
and energy harvesting receiver for SWIPT, namely \emph{``time switching"} (TS) and \emph{``power splitting" }(PS). If the TS is  employed at the receiver, the received signal is  processed by either energy harvesting or information decoding. With PS
employed at the receiver, the received signal is split into two signal streams with a certain ratio by a power splitter, where one stream is to the energy receiver and the other one is to the information receiver. The authors in \cite{LiuZhang1,LiuZhang2}  investigated the tradeoff between the ergodic rate and average energy for SWIPT. In \cite{Shi}, the transmit beamforming  and the PS strategy were jointly optimized for the multiple-input single-output (MISO) multiuser system. The authors in \cite{Ng2} studied SWIPT based energy-efficiency in downlink orthogonal frequency division multiplexing (OFDM) systems. SWIPT has been also considered as an efficient solution for physical layer security \cite{Ng1,ZhangTIFS,MengTWC,MLiu}.

In relay or sensor networks, the intermediate relay (or sensor) nodes often have limited battery storage and require external charging to remain active. Therefore, SWIPT is more important and applicable in relay or sensor networks. Several works investigated SWIPT in cooperative single-input single-output (SISO) relay systems. For instance, outage and ergodic capacity for both TS and PS were derived in \cite{Nasir}, where the relay has the energy harvesting function and harvests a fraction of energy from the source, then the relay uses the harvested energy to forward the source's information to the destination. This is referred to as the \emph{``harvest-then-use"} energy harvesting system. Both decode-and-forward (DF) and amplify-and-forward (AF)  were studied in this work. The optimization of energy arrivals for throughput maximization using Lagrangian duality was proposed in \cite{Gurakan}  for multiuser full-duplex relay system.
Outage probability and diversity gain of SWIPT were characterized in \cite{Ding} where multiple relays assist a source-destination pair.  Power allocation strategies were studied in \cite{Ding2}  where a relay node with energy harvesting function assists multiple source-destination transmissions. The authors in \cite{Zhong} investigated the optimal TS ratio for full-duplex relaying systems. In \cite{ChenGame2015}, game theory for interference relay channels with PS was studied.  The authors in \cite{Gong} studied the TS  in a multi-relay network with relay beamforming. Outage of PS-based SWIPT with DF full-duplex was studied in \cite{LiuTW16}.
The author in \cite{YuanCL} studied optimal PS strategies for both AF and DF relaying in multi-relay assisted cooperative networks. Two protocols with/without direct link and optimal resource allocation schemes for SWIPT based OFDM relay system were investigated in \cite{YuanVT}.

Since multiple-input multiple-output (MIMO) or multi-antenna technique has been adopted as an efficient solution to achieve high spectral efficiency for current and future broadband wireless systems, a handful of recent works also discussed  SWIPT in MIMO relay systems.
Efficient algorithms for SWIPT in MIMO-OFDM AF relay system were proposed in \cite{Xiong2015}, where the relay harvests and then uses the energy from the source for information forwarding.
Self-energy recycling for wireless-powered full-duplex MIMO AF relay was studied in \cite{Zeng1}, in which one of the relay antennas is used to harvest energy from the source and the others are used to receive information. The authors in \cite{Quanzhong2014} studied SWIPT in  MIMO AF relay systems, where secure relay beamforming was designed for a destination, eavesdropper and energy receiver.
Antenna clustering methods were proposed in \cite{Krikidis,ZZhou}, where the multiple antennas of the relay node are partitioned into two disjoint sets, with one for energy harvesting and the other for information decoding. SWIPT in  MIMO AF was also investigated in \cite{Huang1,Huang2}, where the relay harvests energy from the source's information flow and the destination's energy flow. In \cite{Mohammadi}, the authors considered TS in full-duplex MIMO DF relaying, where time allocation with different precoder designs were proposed.
%The work \cite{Du} assumed that both energy harvesting and information decoding occurs in the first hop by using OFDM relaying. Optimal TS strategy by an iterative algorithm was proposed in \cite{Yang}.

In view of these related works on  SWIPT based multi-antenna relay systems, it is found that most works focused on AF relaying strategy, and
the DF relaying strategy is much less investigated. Though \cite{Mohammadi} considered DF relaying, the work only considered TS at the relay node. Thus, the optimal transceiver architecture design and optimal wireless resource allocation for SWIPT based multi-antenna DF relaying leave a large space to be exploited.
This motivates our paper.

In this paper, we consider a classical three-node cooperative relay transmission, where the half-duplex relay node is equipped with multiple antennas and the source and destination are equipped with a single antenna. The source node is with fixed energy supply, and the relay node has no energy or is not willing to expend its own energy to help the source, i.e., the harvest-then-use based SWIPT is adopted. Specifically,  assume that the relay node has the energy harvesting function, it harvests the energy from the source's signal for helping the information transmission from the source to the destination. The main contributions of this paper are summarized as follows:

\begin{itemize}
\item We investigate PS based SWIPT in multi-antenna  relay system using DF relaying strategy, where power splitting ratios and the allocation of the harvested power at the relay node are jointly optimized to maximize the end-to-end information rate. The distinct feature of the formulated problem is that the multiple relay antennas have individual power splitters to make the system more flexible. By doing so, an additional dimension of spatial diversity can be explored and hence the new resource allocation problem becomes complicated and challenging.
\item Efficient algorithm is proposed to optimally solve the joint optimization problem. In particular, we reveal that the power splitting ratios of the relay antennas should be identical at the optimum. Moreover, we show that the multi-antenna relay system is equivalent to a \emph{virtual} single-antenna relay system.
\item To ease the hardware implementation of the PS scheme, we propose the antenna clustering scheme, where the multiple antennas of the relay are partitioned as two disjoint groups with one for information decoding and the other for energy harvesting. Optimal and suboptimal clustering methods are proposed. It is notable that the proposed suboptimal method is only with linear complexity and approaches to the optimal performance.
\item Valuable insights are provided via simulations. In particular, we show that the optimal power splitting ratio of PS remains unchanged with the transmit power and only depends on the channel conditions of the second hop, while the optimal time allocation factor of TS is decreasing with the transmit power.
\end{itemize}

The rest of this paper is  organized as follows. Section II describes the system model. Section III and Section IV present the problem formulation and optimal solution of the PS scheme. Section V proposes the antenna clustering problem and the corresponding solutions. Simulation results and discussions are provided in Section VI. Finally, Section VII concludes this paper.

\section{System Model}

\begin{figure}[t]
\begin{centering}
\includegraphics[scale=0.8]{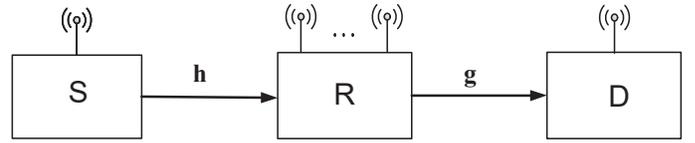}
\vspace{-0.1cm}
 \caption{System model.}\label{fig:system}
\end{centering}
\vspace{-0.3cm}
\end{figure}

We consider a two-hop cooperative relay network as shown in Fig. \ref{fig:system}, where the relay node $\textsf{R}$ is equipped with multiple antennas and the source $\textsf{S}$ and destination $\textsf{D}$ are equipped with a single antenna.  The considered three-node configuration is a very general model that can be applied to many wireless communication applications, such as cellular or ad hoc network. If in a cellular network, the communication of a D2D pair (i.e., the single-antenna source-destination pair) can be assisted by a multi-antenna relay node. If in an ad hoc network, a multi-antenna transmitter without its own transmission task at some time can help another pair (i.e., the single-antenna source-destination pair). 
The relay node is half-duplex for practical consideration. The antenna set of the relay node is denoted as $\mathcal A=\{1,\cdots,N\}$. It is assumed that the direct link between the source and the destination is unavailable due to the shielding effect caused by obstacles. This is the well known Type-II relay model in the 3rd generation partnership project long term evolution advanced (3GPP LTE-A).
We assume that the additive white Gaussian noises (AWGN) at all nodes are independent circular symmetric complex Gaussian random variables, each having zero mean and unit variance.  The transmission from the source to destination is divided into consecutive frames, where the channel fading remains unchanged within each transmission frame but varies from one frame to another. We also assume that perfect channel state information (CSI) are available for centralized processing, which can be obtained for the following way: In the training phase, the relay broadcasts pilot signal, while the source and destination receive the pilot signal for channel estimation and then feed the CSI back to the relay. The relay collects the global CSI for resource allocation.
%
%{\color{blue} Note that for every digital communication system, a training phase for acquiring CSI is necessary before information transmission, such that the system can determine its modulation and coding scheme (MCS). The full CSI can be achieved in following way: In the training phase, the source transmits a pilot signal to relay. The relay splits the received energy of the pilot signal into two parts, with one for energy harvesting and the other for information decoding. The relay can use a part of the harvested energy from the pilot signal to estimate the source-to-relay CSI and use the rest harvested energy to send pilot signal to the destination. Then the destination estimates the relay-to-destination CSI and feeds it back to the relay. The analysis of energy consumption in training phase, such as how much energy are needed for pilot signal and CSI estimation, are beyond the scope of this paper.}

%%
%\begin{figure}[t]
%\begin{centering}
%\includegraphics[scale=1]{}
%\vspace{-0.1cm}
% \caption{Illustration of harvest-then-use based SWIPT in MIMO relaying: (a) PS and (b) TS.}\label{fig:protocol}
%\end{centering}
%\vspace{-0.3cm}
%\end{figure}
%%

The relay node has the energy harvesting function to harvest energy from the received signals by PS  employed at the relay receiver.  The relay has no energy (or does not freely expend its own energy) to help the source, but it can forward the source's information by using the energy harvested from the source.

There are two points about the multi-antenna DF relaying to be noted. First, the received signal at every relay antenna can be jointly decoded at the relay. Second, the harvested power at all relay antennas in the first hop are added up as a total power for information forwarding over all relay antennas in the next hop.

\begin{figure}[t]
\begin{centering}
\includegraphics[scale=1]{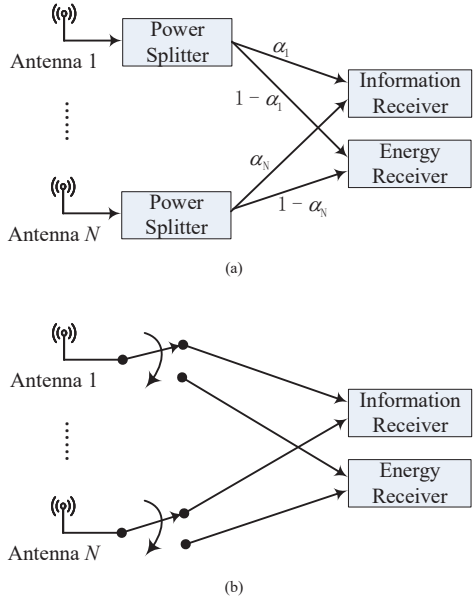}
\vspace{-0.1cm}
 \caption{Illustration of multi-antenna relay receiver: (a) power splitting scheme and (b) antenna clustering scheme.}\label{fig:protocol}
\end{centering}
\vspace{-0.3cm}
\end{figure}

\section{ Problem Formulation and Optimal Solution for the PS Protocol}

In this section, the joint optimization problem of determining PS ratios and power allocation at the multi-antenna relay is studied.

\subsection{Problem Formulation}
We use $\tilde{ h}_i$ to denote the complex channel coefficient of the source to relay antenna $i$, $\tilde{g}_i$ the complex channel coefficient of relay antenna $i$ to the destination, and denote $h_i=|\tilde{ h}_i|^2$ and $g_i=|\tilde{g}_i|^2$ as the channel power gains.

For the PS embedded multi-antenna relay, the transmission frame is divided into two equal phases. At the first phase, the source transmits signal $x$ to the relay node, and the received RF signal at relay antenna $i$ is $r_i=\tilde{h}_ix+n_i$, where $E[|x|^2]=P$ is the source's power and $n_i$ is the received AWGN at relay antenna $i$. Denote $0\leq\alpha_i\leq1$ as the power splitting ratio at relay antenna $i$, as shown in Fig. \ref{fig:protocol}(a). By ignoring the received AWGN $n_i$, i.e., $r_i=\tilde{h}_ix$, the received RF signal used to information decoding is $\sqrt{\alpha_i}r_i=\sqrt{\alpha_i}\tilde{h}_ix$ whose power is  $Ph_i$, and the received RF signal used to energy harvesting is $\sqrt{1-\alpha_i}r_i=\sqrt{1-\alpha_i}\tilde{h}_ix$ whose power is  $Ph_i(1-\alpha_i)$. Note that at the information receiver of the relay node, the received RF signal $\sqrt{\alpha_i}r_i$ is first converted to a complex baseband signal and then digitalized by an analog-to-digital converter (ADC) for further decoding \cite{ZhouZhang}. Thus the ADC output for decoding is $r_i'=r_i+n_{{\rm cov}}=\sqrt{\alpha_i}\tilde{h}_ix+n_{{\rm cov}}$, where $n_{{\rm cov}}$ is the noise introduced by the RF
band to baseband signal conversion and assumed to be zero mean and unit variance.

Let $p_i$ denote the transmit power of relay antenna $i$ in the second hop, we can easily obtain the end-to-end achievable rate of the PS as
\begin{eqnarray}\label{eqn:Rps}
  R_{PS}\leq\frac{1}{2}\min\Bigg\{\log_2\left(1+\sum_{i\in\mathcal A} Ph_i\alpha_i\right),\nonumber\\
  \log_2\left(1+\left(\sum_{i\in\mathcal A} \sqrt{p_ig_i}\right)^2\right)\Bigg\}.
\end{eqnarray}
Here the pre-log factor $\frac{1}{2}$ is due to the fact that two phases are used for information transmission. Note that the first term of the min-operator in \eqref{eqn:Rps} is because of that the information received at the multiple antennas of the relay can be decoded jointly in the first hop, and the second term is due to the fact that the second hop is actually a MISO channel \cite[p.179]{Tse}.

Our goal is to maximize the end-to-end achievable rate by jointly determining the transmit power and
power splitting ratio on each relay antenna. Let $\boldsymbol\alpha=[\alpha_1,\cdots,\alpha_N]^T$ and $\mathbf p=[p_1,\cdots,p_N]^T$, the problem can be mathematically formulated as
\begin{subequations}
\begin{align}
\textbf{P1:}~~~  \max_{\boldsymbol\alpha,\mathbf p, R_{PS}}~~&R_{PS}\\
  s.t.~~~~&\sum_{i\in\mathcal A}p_i\leq\sum_{i\in\mathcal A}\zeta Ph_i(1-\alpha_i)\label{eqn:rpcons1}\\
  &0\leq\alpha_i\leq1,~\forall i\in\mathcal A\label{eqn:rpcons2}\\
  &p_i\geq0,~\forall i\in\mathcal A,
\end{align}
\end{subequations}
where $\zeta$ is the energy conversion efficiency, and the constraint \eqref{eqn:rpcons1} ensures that the transmit power of the relay node can not exceed its harvested power. Since the circuit power consumption can be assumed as a constant in general, adding the the circuit power consumption in the power constraint \eqref{eqn:rpcons1} does not affect our algorithm in the sequel.

%The difficulty of solving this problem lies in twofold: First, the two hops of DF relaying are tightly coupled due to the integration of SWIPT; Second, as each relay antenna has a power splitter, the achievable rate expression of the multi-antenna relay system involves multiple variables in the logarithmic function, which is not easy to tackle.

\subsection{Optimal Solution}

Note that  the region of $R_{PS}$ is a  convex set, the constraint \eqref{eqn:rpcons1} is linear, and the constraint \eqref{eqn:rpcons2} is affine. Therefore, the problem \textbf{P1} is a convex problem, and we can use the Lagrangian dual method to find the globally optimal solution.

We first let the non-negative Lagrangian multipliers $\lambda$ and $\beta$ associate with the two rate constraints of $R_{PS}$, and $\mu$ with the total power constraint \eqref{eqn:rpcons1}. The Lagrangian of \textbf{P1} is written as
\begin{align}\label{eqn:Laps}
  &L(\lambda,\beta,\mu,\boldsymbol\alpha,\mathbf p,R_{PS})\nonumber\\
  &=R_{PS}+\lambda\left[\frac{1}{2}\log_2\left(1+\sum_{i\in\mathcal A} Ph_i\alpha_i\right)-R_{PS}\right]\nonumber\\
  &+\beta\left[\frac{1}{2}\log_2\left(1+\left(\sum_{i\in\mathcal A} \sqrt{p_ig_i}\right)^2\right)-R_{PS}\right]\nonumber\\
  &+\mu\left[\sum_{i\in\mathcal A}\zeta Ph_i(1-\alpha_i)-\sum_{i\in\mathcal A}p_i\right].
\end{align}

Denote $\mathcal D$ as the set of $\{\boldsymbol\alpha,\mathbf p,R_{PS}\}$ satisfying the primary constraints, then the dual function of \textbf{P1} is given by
\begin{equation}
g_{PS}(\lambda,\beta,\mu)=\max_{\{\boldsymbol\alpha,\mathbf p,R_{PS}\}\in\mathcal D}L(\lambda,\beta,\mu,\boldsymbol\alpha,\mathbf p,R_{PS}).
\end{equation}

To compute the dual function $g_{PS}(\lambda,\beta,\mu)$, we need to find the optimal $\{\boldsymbol\alpha^*,\mathbf p^*\}$ to maximize the Lagrangian under the given dual variables $\{\lambda,\beta,\mu\}$.

The part of the dual function with respect to the rate variable $R_{PS}$ can be expressed as
\begin{equation}
g_0(\lambda,\beta)=\max_{R_{PS}\geq0}(1-\lambda-\beta)R_{PS}.
\end{equation}
To make sure that the dual function is bounded, the condition $(1-\lambda-\beta)=0$ must hold such that $g_0(\lambda,\beta)\equiv0$ \cite{YuanTWC13}, which implies that $\beta=1-\lambda$. Note that $0\leq\lambda\leq1$ such that $\beta$ is non-negative. Then we can remove the variables $R_{PS}$ and $\beta$ in following derivations.

By removing $R_{PS}$ and $\beta$ in \eqref{eqn:Laps}, the Lagrangian is rewritten as
\begin{align}
  L(\lambda,\mu,\boldsymbol\alpha,\mathbf p)=&\frac{\lambda}{2}\log_2\left(1+\sum_{i\in\mathcal A}Ph_i\alpha_i\right)\nonumber\\
  &+\frac{1-\lambda}{2}\log_2\left(1+\left(\sum_{i\in\mathcal A} \sqrt{p_ig_i}\right)^2\right)\nonumber\\
  &-\mu\sum_{i\in\mathcal A}p_i-\mu\sum_{i\in\mathcal A}\zeta Ph_i\alpha_i.
\end{align}

We first derive the optimal power splitting ratios $\boldsymbol\alpha^*$ on the relay antennas in the following proposition.

\begin{proposition}\label{prop:a}
For the multi-antenna DF relaying, the optimal power splitting ratios on all relay antennas should be the same. Define $\alpha_1=\cdots=\alpha_N\triangleq\alpha$, the optimal identical power splitting ratio $\alpha^*$ is
\begin{equation}\label{eqn:opta}
  \alpha^*=\left[\frac{1}{\sum_{j\in\mathcal A}Ph_j}\left(\frac{\lambda^*}{\delta\mu^*\zeta}-1\right)\right]_0^1,
\end{equation}
where $\delta\triangleq2\ln2$ and $[x]_a^b\triangleq\max\{\min\{x,b\},a\}$.
\end{proposition}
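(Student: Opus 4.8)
The plan is to exploit two features of the reduced Lagrangian $L(\lambda,\mu,\boldsymbol\alpha,\mathbf p)$: it is separable in $\boldsymbol\alpha$ and $\mathbf p$, so the $\boldsymbol\alpha$-part of the dual maximization can be treated on its own; and its dependence on $\boldsymbol\alpha$ enters only through the scalar aggregate $t\triangleq\sum_{i\in\mathcal A}Ph_i\alpha_i$. First I would isolate the $\boldsymbol\alpha$-subproblem of the dual-function evaluation, namely
\[
\max_{\boldsymbol\alpha\in[0,1]^N}\ \frac{\lambda}{2}\log_2\!\Big(1+\sum_{i\in\mathcal A}Ph_i\alpha_i\Big)-\mu\zeta\sum_{i\in\mathcal A}Ph_i\alpha_i ,
\]
and substitute $t=\sum_{i\in\mathcal A}Ph_i\alpha_i$. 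Since $0\le\alpha_i\le1$, the reachable set for $t$ is exactly the interval $[0,\sum_{j\in\mathcal A}Ph_j]$, and on it the objective becomes the one-dimensional concave function $\phi(t)=\frac{\lambda}{2}\log_2(1+t)-\mu\zeta t$.

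Second, I would maximize $\phi$ over $[0,\sum_{j\in\mathcal A}Ph_j]$. Its stationarity condition $\phi'(t)=\frac{\lambda}{\delta(1+t)}-\mu\zeta=0$ (with $\delta=2\ln2$) gives the unconstrained optimizer $\frac{\lambda}{\delta\mu\zeta}-1$, and by concavity the constrained optimizer is its projection, $t^{*}=\big[\frac{\lambda}{\delta\mu\zeta}-1\big]_0^{\sum_{j\in\mathcal A}Ph_j}$. Third, I would translate $t^{*}$ back to $\boldsymbol\alpha$: because the objective sees $\boldsymbol\alpha$ only via $t$, \emph{any} $\boldsymbol\alpha\in[0,1]^N$ with $\sum_{i\in\mathcal A}Ph_i\alpha_i=t^{*}$ attains the maximum; in particular the symmetric choice $\alpha_1=\cdots=\alpha_N=\alpha$ with $\alpha\sum_{j\in\mathcal A}Ph_j=t^{*}$ is feasible (since $0\le t^{*}\le\sum_{j\in\mathcal A}Ph_j$ forces $\alpha\in[0,1]$) and optimal — this is the sense in which ``the optimal power splitting ratios should be identical''. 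Dividing $t^{*}$ by the positive constant $\sum_{j\in\mathcal A}Ph_j$ rescales the projection limits to $[0,1]$ and yields $\alpha=\big[\frac{1}{\sum_{j\in\mathcal A}Ph_j}(\frac{\lambda}{\delta\mu\zeta}-1)\big]_0^1$. Evaluating at the optimal dual pair $(\lambda^{*},\mu^{*})$ and invoking strong duality for the convex problem \textbf{P1} — Slater's condition holds, e.g. by taking any $\boldsymbol\alpha$ in the open box together with sufficiently small $\mathbf p$ — turns this into the primal optimum \eqref{eqn:opta}.

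The argument is essentially routine once the aggregation step is seen, so I do not expect a serious obstacle; the points that need care are (i) tracking the exact reachable range of $t$, so that the projection is onto $[0,\sum_{j}Ph_j]$ and not $[0,\infty)$, and (ii) phrasing the conclusion correctly — equal ratios are not \emph{forced} (only the weighted sum $\sum_i h_i\alpha_i$ enters the objective), but there always exists an optimal solution with equal ratios, and it is precisely this reduction that later collapses the multi-antenna relay to a \emph{virtual} single-antenna one. It is also worth noting that $\mu^{*}>0$ at the optimum because the harvested-power constraint \eqref{eqn:rpcons1} is active, so \eqref{eqn:opta} is well defined.
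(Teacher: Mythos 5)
Your proof is correct, and it reaches \eqref{eqn:opta} by a genuinely different (and in one respect sharper) route than the paper. The paper works coordinate-wise: it writes the KKT sign conditions for each $\alpha_i$, observes that $\partial L/\partial\alpha_i = Ph_i\bigl[\tfrac{\lambda^*}{\delta(1+\sum_j Ph_j\alpha_j^*)}-\mu^*\zeta\bigr]$ has the same bracketed factor for every $i$, concludes from this that the ratios ``must'' all be equal, and then sets the bracket to zero to solve for $\alpha$. You instead notice that the $\boldsymbol\alpha$-part of the Lagrangian depends on $\boldsymbol\alpha$ only through the aggregate $t=\sum_i Ph_i\alpha_i$, reduce to the one-dimensional concave maximization of $\phi(t)=\tfrac{\lambda}{2}\log_2(1+t)-\mu\zeta t$ over $[0,\sum_j Ph_j]$, and recover $\alpha^*$ by projecting the stationary point and rescaling. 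Both arguments exploit the same structural fact, but your packaging buys two things: it avoids any case analysis on the KKT signs (the boundary cases $\alpha^*=0$ and $\alpha^*=1$ come out automatically from the projection onto the reachable interval of $t$), and it states the uniqueness issue correctly --- equal ratios are not forced by optimality, since only $\sum_i h_i\alpha_i$ enters the objective; rather, an equal-ratio optimizer always exists. The paper's phrasing ``it must be $\alpha_1^*=\cdots=\alpha_N^*$'' overstates this, and your version is the one that actually justifies the reduction to the virtual single-antenna relay in Proposition \ref{prop:sa}. Your added remarks on Slater's condition and on $\mu^*>0$ (activity of \eqref{eqn:rpcons1}) are consistent with what the paper assumes implicitly.
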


\begin{proof} For any given dual variables $\{\lambda,\mu\}$, it is readily to verify that the Lagrangian $L(\lambda,\mu,\boldsymbol\alpha,\mathbf p)$ is  concave in $\boldsymbol\alpha$. By applying the optimality Karush-Kuhn-Tucker (KKT) conditions \cite{Boyd} with respect to $\boldsymbol\alpha$, the following conditions always hold at the optimal dual point $\{\lambda^*,\mu^*\}$:

\begin{eqnarray}
  \frac{\partial L(\lambda,\mu,\boldsymbol\alpha,\mathbf p)}{\partial\alpha_i^*}\begin{cases}<0,~\alpha_i^*=0\\
  =0,~0<\alpha_i^*<1\\
  >0,~\alpha_i^*=1\end{cases}\forall i,
\end{eqnarray}
where
\begin{align}
\frac{\partial L(\lambda,\mu,\boldsymbol\alpha,\mathbf p)}{\partial\alpha_i^*}&=\frac{\lambda^* Ph_i}{\delta(1+\sum_{j\in\mathcal A}Ph_j\alpha_j^*)}-\mu^*\zeta Ph_i\nonumber\\
&=Ph_i\left[\frac{\lambda^* }{\delta(1+\sum_{j\in\mathcal A}Ph_j\alpha_j^*)}-\mu^*\zeta \right].
\end{align}

As the term $\frac{\lambda^*}{\delta(1+\sum_{j\in\mathcal A}Ph_j\alpha_j^*)}-\mu^*\zeta$ in above is identical for all $i$ and thus is a constant, it must be $\alpha_1^*=\cdots=\alpha_N^*$. Substituting the result to this constant term and equating it to be zero, \eqref{eqn:opta} can be obtained.

This completes the proof.
\end{proof}

We then turn to the optimal power allocations $\mathbf p^*$ of the relay node.
Denote $P_R\triangleq\sum_{i\in\mathcal A}\zeta Ph_i(1-\alpha_i^*)$ as the total harvested power of the relay for given power splitting ratio $\alpha_i^*=\alpha^*$. As the second hop is actually a MISO channel, the optimal power allocation of the relay node follows the maximal-ratio combining (MRC) (see Appendix \ref{app:p}), i.e.,
\begin{equation}\label{eqn:optp}
p_i^*= \frac{g_i}{\sum_{j\in\mathcal A}g_j}P_R,~\forall i.
\end{equation}

%
%where $P_R\triangleq\sum_{i\in\mathcal A}\zeta Ph_i(1-\alpha_i^*)$ is the total harvested power of the relay for given power splitting ratio $\alpha_i^*=\alpha^*$.

After finding the optimal $\{\boldsymbol\alpha^*,\mathbf p^*\}$, we turn to solve the dual problem which can be expressed as
\begin{align}
\min_{\{\lambda,\mu\}}~&g_{PS}(\lambda,\mu)\nonumber\\
s.t.~~&0\leq\lambda\leq1,\mu\geq0.
\end{align}

As a dual function is always convex \cite{Boyd}, we adopt the ellipsoid method to simultaneously iterate the dual variables $\lambda$ and $\mu$ to the optimal ones by using the defined subgradients as follows
\begin{align}\label{eqn:sg1}
\Delta\lambda=&\frac{1}{2}\log_2\left(1+\sum_{i\in\mathcal A} Ph_i\alpha_i^*\right)\nonumber\\
  &-\frac{1}{2}\log_2\left(1+\left(\sum_{i\in\mathcal A} \sqrt{p_i^*g_i}\right)^2\right),
\end{align}

\begin{equation}\label{eqn:sg2}
\Delta\mu=\sum_{i\in\mathcal A}\zeta Ph_i(1-\alpha_i^*)-\sum_{i\in\mathcal A}p_i^*.
\end{equation}

%\subsection{Special Case with One Relay Antenna}
So far we have solved \textbf{P1} optimally by the routine of the dual method, which is an iterative algorithm (see Algorithm 1 formally) for updating the dual variables. We can also find the optimal solution for \textbf{P1} without any iteration by carefully exploring Proposition \ref{prop:a} and \eqref{eqn:optp} in the dual method.

\subsection{Equivalence to Single-Antenna Relay Case}

To obtain more insights, we first consider a special case where the relay node is equipped with one antenna. In this case, assume that $h$ and $g$ are the channel gains of the first and second hops, \textbf{P1} becomes
\begin{align}\label{eqn:sa}
\max_{0\leq\alpha\leq1} R_{PS}\leq\frac{1}{2}\min\{\log_2\left(1+ Ph\alpha\right),\nonumber\\
\log_2\left(1+ \zeta Ph(1-\alpha)g\right)\}.
\end{align}
It is straightforward that the optimal solution of the problem \eqref{eqn:sa} must happen at $\log_2\left(1+ Ph\alpha\right)=\log_2\left(1+ \zeta Ph(1-\alpha)g\right)$, which results in $\alpha^*=\frac{\zeta g}{1+\zeta g}$.
%
%\begin{equation}
%\alpha^*=\frac{\zeta g}{1+\zeta g}.
%\end{equation}
%
It is interesting that $\alpha^*$ in above only depends on the energy conversion efficiency and the second hop channel gain, and is regardless of the source's transmit power and the channel gain of the first hop.

Then, based on Proposition \ref{prop:a} and \eqref{eqn:optp} obtained in the dual method, and the above single antenna example, we have the following proposition:

\begin{proposition}\label{prop:sa}
The multi-antenna relay can be regarded as a ``virtual" single-antenna relay by letting $h\triangleq\sum_{i\in\mathcal A}h_i$ and $g\triangleq\sum_{i\in\mathcal A}g_i$. Thus \textbf{P1} is equivalent to the problem \eqref{eqn:sa} and the optimal solution is

\begin{align}
\alpha^*&=\frac{\zeta g}{1+\zeta g},
%p_N^*&=\frac{\zeta Ph}{1+\zeta g}.
\end{align}
and the optimal achievable rate is

\begin{equation}
R_{PS}^* = \frac{1}{2}\log_2\left(1+\frac{\zeta Phg}{1+\zeta g}\right).
\end{equation}
\end{proposition}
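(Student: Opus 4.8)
The plan is to show that, once the two structural results already derived---the equal power-splitting property of Proposition~\ref{prop:a} and the MRC power allocation \eqref{eqn:optp}---are substituted into the objective \eqref{eqn:Rps}, problem \textbf{P1} collapses to the scalar problem \eqref{eqn:sa} with the aggregated gains $h=\sum_{i\in\mathcal A}h_i$ and $g=\sum_{i\in\mathcal A}g_i$. The claimed $\alpha^*$ and $R_{PS}^*$ then follow from the elementary analysis of \eqref{eqn:sa} already carried out for the single-antenna case.

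First I would invoke Proposition~\ref{prop:a} to set $\alpha_1^*=\cdots=\alpha_N^*=\alpha^*$, so that the first rate term becomes $\log_2(1+\sum_{i}Ph_i\alpha^*)=\log_2(1+Ph\alpha^*)$ and the harvested power is $P_R=\sum_{i}\zeta Ph_i(1-\alpha^*)=\zeta Ph(1-\alpha^*)$. Next I would substitute the MRC allocation \eqref{eqn:optp}, namely $p_i^*=g_iP_R/\sum_{j}g_j$, into the second rate term. The key algebraic identity is
\begin{equation*}
\sum_{i\in\mathcal A}\sqrt{p_i^*g_i}=\sqrt{\tfrac{P_R}{\sum_{j}g_j}}\sum_{i\in\mathcal A}g_i=\sqrt{P_R\,g},
\end{equation*}
so that $\big(\sum_{i}\sqrt{p_i^*g_i}\big)^2=P_Rg=\zeta Ph(1-\alpha^*)g$ and the second rate term becomes $\log_2(1+\zeta Ph(1-\alpha^*)g)$. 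Hence the optimal value of \textbf{P1} equals the optimal value of the scalar program $\max_{0\le\alpha\le1}\tfrac12\min\{\log_2(1+Ph\alpha),\log_2(1+\zeta Ph(1-\alpha)g)\}$, which is exactly \eqref{eqn:sa} with $h$ and $g$ interpreted as the aggregated gains; this is the claimed ``virtual'' single-antenna equivalence.

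Finally I would solve this scalar problem as in the single-antenna discussion: the first term is strictly increasing in $\alpha$ with value $0$ at $\alpha=0$, and the second is strictly decreasing in $\alpha$ with value $0$ at $\alpha=1$, so the minimum of the two is maximized at their unique crossing point, obtained from $Ph\alpha=\zeta Ph(1-\alpha)g$, i.e.\ $\alpha^*=\zeta g/(1+\zeta g)$. Substituting back gives $R_{PS}^*=\tfrac12\log_2(1+Ph\alpha^*)=\tfrac12\log_2\!\big(1+\zeta Phg/(1+\zeta g)\big)$.

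I do not anticipate a serious obstacle; the argument is the substitution of two already-proven facts followed by a one-variable optimization. The one point that warrants care is justifying that plugging $\boldsymbol\alpha^*$ and $\mathbf p^*$ into \eqref{eqn:Rps} loses no optimality---this is legitimate because \textbf{P1} is convex (as noted before Proposition~\ref{prop:a}), so the KKT-characterized $\boldsymbol\alpha^*$ and the MRC $\mathbf p^*$ are globally optimal and the only remaining degree of freedom is the single scalar $\alpha$ of \eqref{eqn:sa}. A minor subtlety is the boundary check $0<\alpha^*<1$, which holds whenever $\zeta,g>0$, so the interior crossing point is feasible and the $[\,\cdot\,]_0^1$ clipping in \eqref{eqn:opta} is inactive at this solution.
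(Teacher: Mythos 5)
Your proposal is correct and follows essentially the same route as the paper: substitute the equal splitting ratios from Proposition~\ref{prop:a} and the MRC allocation \eqref{eqn:optp} into \eqref{eqn:Rps}, observe that $\sum_i\sqrt{p_i^*g_i}=\sqrt{P_R g}$ collapses the objective to that of \eqref{eqn:sa} with aggregated gains, and solve the resulting one-variable min-of-two-logs problem at the crossing point. Your added remarks on why the substitution loses no optimality and on the interior feasibility of $\alpha^*$ are sound and slightly more careful than the paper's own wording.
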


\begin{proof}
First, based on Proposition \ref{prop:a} that all relay antennas have the same power splitting ratio $\alpha$, we can write the information rate of the first hop as $\frac{1}{2}\log_2(1+Ph\alpha)$, and the harvested power is $P_R=\zeta Ph(1-\alpha)$, where $h\triangleq\sum_{i\in\mathcal A}h_i$. Based on \eqref{eqn:optp} that all harvested power $\zeta Ph(1-\alpha)$ is proportionally allocated to each relay antenna based on its channel gain of the second hop, we obtain that the received signal amplitude at the destination from each relay antenna $i$ is $\sqrt{p_ig_i}=g_i\sqrt{\frac{P_R}{\sum_{j\in\mathcal A}g_j}}$, and thus the SNR at the destination is $(\sum_{i\in\mathcal A}\sqrt{p_ig_i})^2=P_R\sum_{i\in\mathcal A}g_i$. Let $g\triangleq \sum_{i\in\mathcal A}g_i$, the information rate of the second hop is $\frac{1}{2}\log_2(1+P_Rg)$.  Therefore, $R_{PS}$ defined in \eqref{eqn:Rps} is equivalent to the objective function of the single-antenna problem \eqref{eqn:sa}. Applying the similar method as for the problem \eqref{eqn:sa}, the conclusions are obtained.
\end{proof}

Proposition \ref{prop:sa} establishes the equivalence between the optimal PS for the multi-antenna and single-antenna DF relaying systems. It also suggests that we can treat the multiple relay antennas as a ``virtual" single antenna with the channel gains of the two hops as $h=\sum_{i\in\mathcal A}h_i$ and $g=\sum_{i\in\mathcal A}g_i$, respectively.

%\subsection{Algorithm and Complexity}
%
%We summarize the proposed solution for \textbf{P1} in Algorithm 1. The computational complexity of the ellipsoid method is $\mathcal O(q^2)$, where $q$ is the number of the dual variables, and clearly $q = 2$ for the proposed PS scheme.
%
%
\begin{algorithm}[!t]
\caption{Dual Method for \textbf{P1}}
\begin{algorithmic}[1]
\STATE \textbf{initialize} $\{\lambda,\mu\}$ as non-negative values.
\REPEAT \STATE Find the optimal power splitting ratios $\boldsymbol\alpha^*(\lambda,\mu)$ using Proposition \ref{prop:a}. \STATE Compute the optimal power allocations $\mathbf p^*(\lambda,\mu)$ using  \eqref{eqn:optp}.  \STATE Update $\{\lambda,\mu\}$ by the ellipsoid method using the subgradients defined in \eqref{eqn:sg1} and \eqref{eqn:sg2}.
 \UNTIL{$\{\lambda,\mu\}$ converge.}
\end{algorithmic}
\end{algorithm}

%\section{Problem Formulation and Optimal Solution for the TS Protocol}

\section{Antenna Clustering Scheme}

In Section III, we propose the optimal solution for the PS, in which a power splitter is required on each relay antenna to adjust the power splitting ratio. However, this could be very costly to implement in practice. Thus in this section, we introduce an antenna clustering scheme as shown in Fig. \ref{fig:protocol}(b). In this scheme, instead of splitting the power at each relay antenna, the relay antenna set $\mathcal A$ is divided into two disjoint subsets $\Omega_I$ and $\Omega_E$, where the relay antennas in $\Omega_I$ are exclusively used for information decoding and the others in $\Omega_E$ are exclusively used for energy harvesting. That is,

\begin{equation}
\alpha_i=\begin{cases}1,~{\rm if}~ i\in\Omega_I\\0,~{\rm if}~ i\in\Omega_E\end{cases}\forall i.
\end{equation}

Compared with the PS, the antenna clustering scheme is practically more favorable since the antenna clustering only needs time switcher at each relay antenna and the power splitters of the PS are more costly.

%As aforementioned, the optimal MIMO DF strategy is that the information received at one group of relay antennas in the first hop can be decoded and re-encoded jointly and then transmitted over a different group of relay antennas in the next hop.
Then the end-to-end achievable rate of the antenna clustering scheme is
\begin{eqnarray}
  R_{AC}\leq\frac{1}{2}\min\Bigg\{\log_2\left(1+\sum_{i\in\Omega_I} Ph_i\right), \nonumber\\
  \log_2\left(1+\left(\sum_{i\in\mathcal A}\sqrt{p_ig_i}\right)^2\right)\Bigg\}.
\end{eqnarray}

The problem is to maximize the end-to-end rate by jointly partitioning the relay antenna set and allocating the harvested power, which can formulated as
\begin{subequations}
\begin{align}
\textbf{P2:}~~~  \max_{\Omega_I,\Omega_E,\mathbf p}~~&R_{AC}\\
  s.t.~~~~&\sum_{i\in\mathcal A}p_i\leq\sum_{i\in\Omega_E}\zeta Ph_i\\
  &p_i\geq0,~\forall i\in\mathcal A\\
  &\Omega_I\cap\Omega_E=\emptyset,~\Omega_I\cup\Omega_E=\mathcal A.
\end{align}
\end{subequations}

\textbf{P2} is a mixed integer programming (MIP) problem and usually NP-hard due to the combinatorial nature. In the following subsections, we discuss the optimal and suboptimal antenna clustering algorithms, respectively.

\subsection{Optimal Antenna Clustering}

In this subsection, we propose the optimal solution for the problem \textbf{P2}. Firstly, it is observed that  given any antenna partitions $\Omega_I$ and $\Omega_E$, the optimal power allocations follow \eqref{eqn:optp}, 
%%
%\begin{equation}
%p_i^*= \frac{g_i}{\sum_{j\in\mathcal A}g_j}P_R,
%\end{equation}
%%
where $P_R=\sum_{i\in\Omega_E}\zeta Ph_i$.

Substituting the above results into \textbf{P2} to eliminate the power allocation variables $\mathbf p$, the problem is reduced to
%%
%%\begin{subequations}
%\begin{align}\label{eqn:p3'}
%\max_{\Omega_I,\Omega_E}&~\frac{1}{2}\min\left\{\log_2\left(1+\sum_{i\in\Omega_I} Ph_i\right),\log_2\left(1+\sum_{i\in\Omega_E}\zeta Ph_ig_N\right)\right\}\nonumber\\
%s.t.~&\Omega_I\cap\Omega_E=\emptyset,~\Omega_I\cup\Omega_E=\mathcal A.
%\end{align}
%%\end{subequations}
%%
%
a combinatorial optimization problem of set partition. The optimal antenna partition $\Omega_I^*$ and $\Omega_E^*$ can be found by exhaustively searching over all $2^{N}$ possible antenna combinations to maximize the resulting rate.

\subsection{Greedy Antenna Clustering}

The optimal antenna clustering algorithm by the exhaustive search is of exponentially increasing complexity
with the number of relay antennas $N$. In this subsection, we propose a low-complexity greedy algorithm for antenna clustering which is only with linear complexity $\mathcal O(N)$ instead of $\mathcal O(2^N)$ by the exhaustive search. The simulation results will show that the proposed greedy algorithm approaches to the optimal performance by the exhaustive search.

The key idea of the proposed greedy algorithm is switching an antenna into $\Omega_I$ or $\Omega_E$ based on the rate improvement. Specifically, at the beginning of the algorithm, we assume that $\Omega_I$ and $\Omega_E$ have only one antenna respectively. For the rest $N-2$ antennas, each antenna is put into one of $\Omega_I$ and $\Omega_E$ if the resulting rate improvement is greater than the other.

We present the greedy algorithm in Algorithm 2. It is obvious that the complexity of Algorithm 2 is $\mathcal O(N)$, which is linear in the number of the relay antennas.
 In \cite{Krikidis}, the algorithm selects $L$ antennas out of total $N$ antennas at the relay for information decoding and  the rest for energy harvesting, which is a ``binary knapsack problem". As binary knapsack problem is NP-complete, it needs a  high (even it is polynomial) computational complexity for finding an efficient solution. In \cite{ZZhou}, for every time the algorithm selects one best relay antenna switched from the energy harvesting set to the information decoding set. Thus, a total of $(N+1)N/2$ possibilities are needed to search at the worst case and the complexity of the greedy antenna clustering method in \cite{ZZhou} is $O(N^2)$. Therefore, our algorithm has much lower complexity compared with that of \cite{Krikidis} and \cite{ZZhou}.

\begin{algorithm}[!t]
\caption{Greedy Algorithm for \textbf{P2}}
\begin{algorithmic}[1]
\STATE \textbf{initialize} $\Omega_I=\Omega_E=\emptyset$.  Randomly select two antennas ($i=1,2$ without loss of generality) into $\Omega_I$ and $\Omega_E$, respectively.
\FOR{$i=3:N$}
\STATE Assume $\Omega_I=\Omega_I\cup i$, compute the end-to-end rate and denote it as $R_{AC}'$;
\STATE Assume $\Omega_E=\Omega_E\cup i$, compute the end-to-end rate and denote it as $R_{AC}''$; \IF {$R_{AC}'>R_{AC}''$} \STATE Update $\Omega_I=\Omega_I\cup i$; \ELSE \STATE Update $\Omega_E=\Omega_E\cup i$; \ENDIF
\ENDFOR

\end{algorithmic}
\end{algorithm}

\section{Simulations}

In this section, extensive numerical results are provided to evaluate the performance of the proposed algorithms. For the purpose of performance comparison, the TS scheme is also considered as a benchmark which can be solved similarly to the PS and the details are omitted here.

%  detailed in Appendix \ref{app:ts}.

We consider a two-dimensional plane where the distance between the source and the destination is $10$, and the relay node is  located in a line between the source and the destination. The source-to-relay and relay-to-destination distances are denoted as $d$ and $10-d$, respectively, where $0<d<10$. Each channel fading is modeled as $c\cdot L^{-\theta}$, where $c$ is the Rayleigh fading factor, $L$ is  the distance, and $\theta$ is the path loss exponent which is set to be $2$. The number of relay antennas, $N=4$ and $N=8$, are both considered. The energy conversion efficiency coefficient is assumed as $\zeta=80\%$.

\subsection{Performance Comparison}

In this subsection, we consider the case where the relay node is at the middle of the source and the destination, i.e. $d=5$. We evaluate the proposed algorithms versus the source's transmit power $P$ in terms of SNR (dB).

\begin{figure}[t]
\begin{centering}
\includegraphics[scale=0.65]{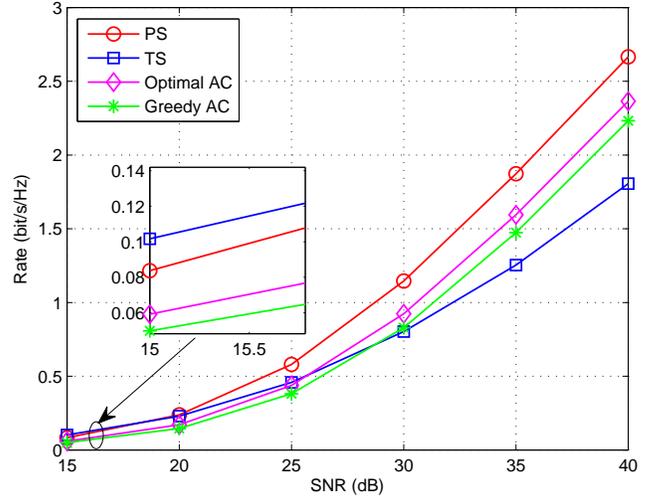}
\vspace{-0.1cm}
 \caption{Rate performance comparison of different algorithms versus the source's transmit power $P$ when $N=4$.}\label{fig:rate_N4}
\end{centering}
\vspace{-0.3cm}
\end{figure}
\begin{figure}[t]
\begin{centering}
\includegraphics[scale=0.65]{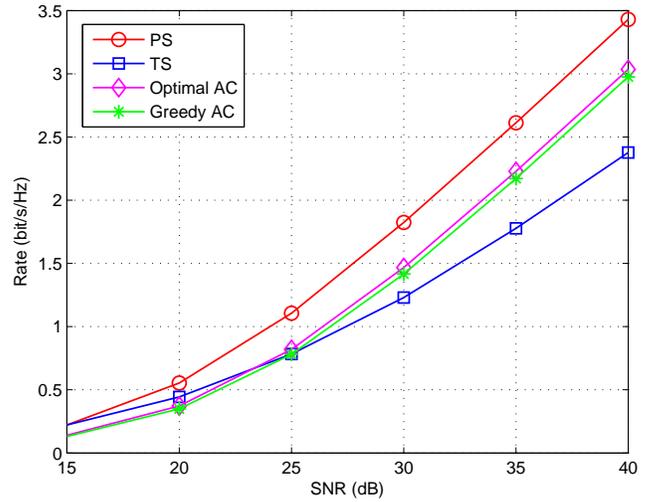}
\vspace{-0.1cm}
 \caption{Rate performance comparison of different algorithms source's transmit power $P$ when $N=8$.}\label{fig:rate_N8}
\end{centering}
\vspace{-0.3cm}
\end{figure}

Figs. \ref{fig:rate_N4} and \ref{fig:rate_N8}  compare the rate performance of different schemes with $N=4$ and $N=8$ relay antennas, respectively. From the two figures, we first observe that the achievable rates of all schemes are increasing with the source's transmit power $P$. This is because higher transmit power leads to more harvested power at the relay, so does for the system's SNR performance. Moreover, more relay antennas also result in higher rates. This is because that more relay antennas not only provide more spatial diversity but also harvest more power from the source.
We also observe that the TS scheme is slightly better than the PS scheme in low SNR region, while the PS scheme outperforms the TS scheme over a wide range of SNR, and the performance gain goes to large when SNR increases. Finally, it is shown that the greedy antenna clustering (AC) scheme performs closely to the optimal one, which demonstrates the effectiveness of the greedy AC scheme. It is also noted that when $N=8$,  Fig. \ref{fig:rate_N8}  shows that the greedy AC scheme almost approximates the optimal AC scheme. This means that more relay antennas improves the performance of the greedy AC scheme.

\begin{figure}[t]
\begin{centering}
\includegraphics[scale=0.65]{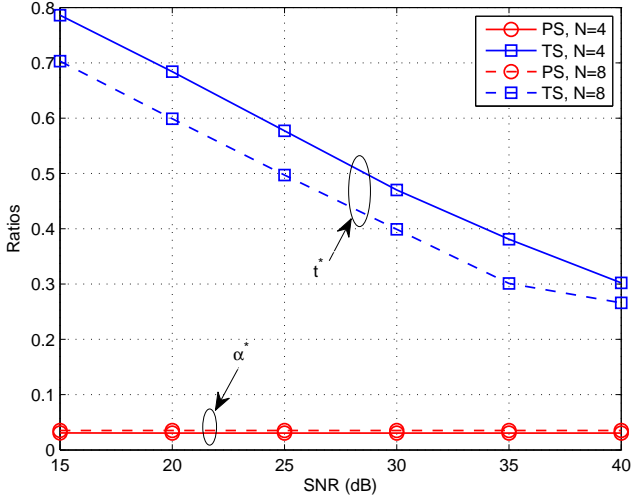}
\vspace{-0.1cm}
 \caption{Optimal $\alpha^*$ and $t^*$ versus source's transmit power $P$.}\label{fig:ratios}
\end{centering}
\vspace{-0.3cm}
\end{figure}

Then, we investigate the optimal power splitting ratio $\alpha^*$ and the optimal time allocation factor $t^*$ of the PS and TS respectively in Fig. \ref{fig:ratios}. It is observed that when the source's transmit power $P$ increases, the optimal power splitting ratio $\alpha^*$ of the PS scheme remains unchanged. In addition, only about $\alpha^*=0.04$ of the received power is used for information decoding, and the most received power is sent to energy harvesting. This shows that the wireless energy decay by path loss is the bottleneck of SWIPT. For the TS, it shows that the optimal time allocation factor $t^*$ decreases as the source's transmit power $P$ increases. This indicates that the wireless charging time can be reduced if the charging power becomes large.

\subsection{Impacts of Relay Locations}

In this subsection, we investigate the impacts of the relay locations where we vary the source-to-relay distance $d$ from $1$ to $9$. Here we fix the source's transmit power as $P=30$ dB.

\begin{figure}[t]
\begin{centering}
\includegraphics[scale=0.65]{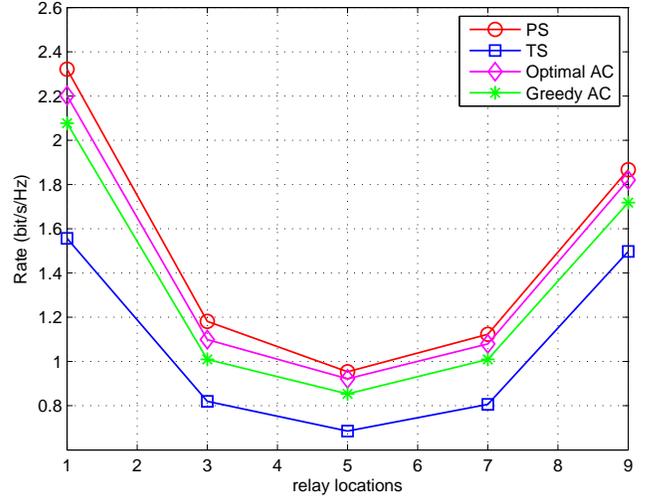}
\vspace{-0.1cm}
 \caption{Rate performance comparison of different algorithms versus the relay locations when $N=4$.}\label{fig:rate_N4_dis}
\end{centering}
\vspace{-0.3cm}
\end{figure}
\begin{figure}[t]
\begin{centering}
\includegraphics[scale=0.65]{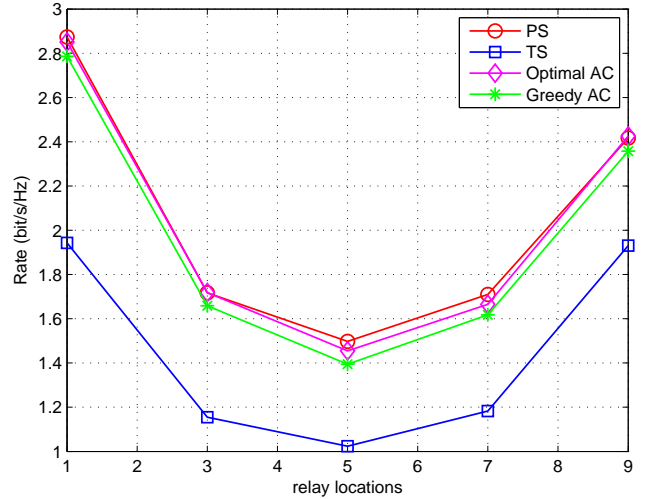}
\vspace{-0.1cm}
 \caption{Rate performance comparison of different algorithms versus the relay locations when $N=8$.}\label{fig:rate_N8_dis}
\end{centering}
\vspace{-0.3cm}
\end{figure}

Figs. \ref{fig:rate_N4_dis} and \ref{fig:rate_N8_dis} show the performance of different schemes versus the relay locations with $N=4$ and $N=8$ relay antennas, respectively. We observe that all schemes have the worst performance when the relay node is located at the middle of the source and the destination, and have  better performance when the relay node is close to the source or destination. The reason may be that, when the relay is close to the source, it yields a relatively higher energy harvesting efficiency. When the relay is close to the destination, a relatively better channel quality of the relay-to-destination link is available such that the system performance may be improved, although a relatively lower energy harvesting efficiency is achieved in this case. It is also observed that the PS scheme performs best and the TS scheme performs worst under this SNR. In addition, the performance of the greedy AC scheme is very close to that of the optimal AC scheme, especially the number of relay antennas is large (i.e., $N=8$). Moreover, we find that the optimal AC scheme almost achieves the same performance of the PS scheme when the number of relay antennas is large.

\begin{figure}[t]
\begin{centering}
\includegraphics[scale=0.65]{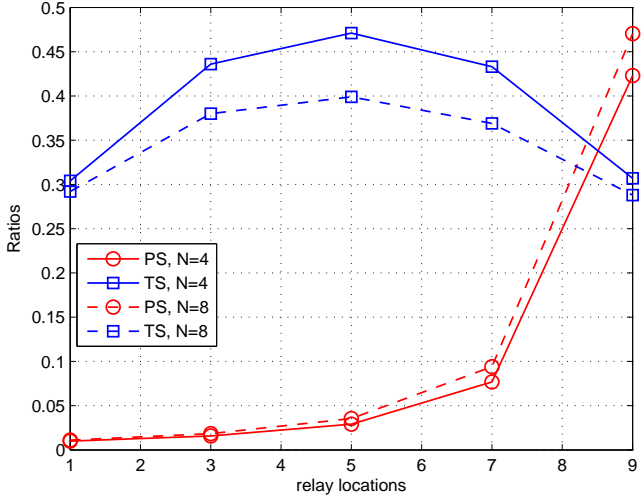}
\vspace{-0.1cm}
 \caption{Optimal $\alpha^*$ and $t^*$ versus the relay locations.}\label{fig:ratios_dis}
\end{centering}
\vspace{-0.3cm}
\end{figure}

Fig. \ref{fig:ratios_dis} shows the impacts of relay locations on the optimal power splitting ratio $\alpha^*$ and the optimal time allocation $t^*$. It first observes that the optimal power splitting ratio $\alpha^*$ increases when the relay node is moving to the destination. This means that more received power is split to information decoding in this case. It is also observed that the wireless charging time $t^*$ is firstly increased and then decreased when the relay moves away from the source. The possible reason is given in above paragraph.

\section{Conclusion}

In this paper, we studied SWIPT in multi-antenna DF relay networks, where the relay adopts the ``harvest-then-use" based energy harvesting strategy to forward information. The PS relay receiver architecture was first considered and the corresponding joint resource allocation problem was formulated and solved optimally. Then the antenna clustering scheme was proposed to ease implementation cost, where the relay antennas are partitioned into two disjoint groups with one for information decoding and the other for energy harvesting. Optimal and suboptimal clustering algorithms were developed.

A few important conclusions have been made through theoretical analysis and extensive simulations. Firstly, for the multi-antenna DF relaying, the harvested power at the relay is optimally allocated to the relay antennas based on the proportional criterion according to their second hop channel gains. Secondly, for the PS scheme, the optimal power splitting ratios of the relay antennas are the same. Thirdly, the PS scheme has a large performance gain over the TS scheme in high SNR, while the TS is slightly better in low SNR. Fourthly, for all schemes, placing the relay in the middle of the source and the destination results in the worst performance, and it is better to place the relay closer to the source or destination. Last but not least, the optimal power splitting ratio of PS remains unchanged with the transmit power and the channel gains of the first hop, while the optimal time allocation factor of TS is decreasing with the transmit power.

\appendices
\section{Derivation of \eqref{eqn:optp}}\label{app:p}
We can obtain the following optimality condition for $\mathbf p^*$ by applying KKT conditions:

\begin{eqnarray}\label{eqn:Lap}
  \frac{\partial L(\lambda,\mu,\boldsymbol\alpha,\mathbf p)}{\partial p_i^*}\begin{cases}<0,~p_i^*=0\\
  =0,~p_i^*>0\end{cases}~\forall i,
\end{eqnarray}
where
\begin{equation}
\frac{\partial L(\lambda,\mu,\boldsymbol\alpha,\mathbf p)}{\partial p_i^*}=\frac{(1-\lambda^*)\left(\sum_{j\in\mathcal A}\sqrt{p_j^*g_j}\right)\sqrt{g_i}}{\delta\left(1+\left(\sum_{j\in\mathcal A} \sqrt{p_j^*g_j}\right)^2\right)\sqrt{p_i^*}}-\mu^*.
\end{equation}

Thus, the optimality condition \eqref{eqn:Lap} becomes

\begin{equation}\label{eqn:D}
  \sqrt{\frac{g_i}{p_i^*}}\leq\frac{\delta\left(1+\left(\sqrt{\sum_{j\in\mathcal A}p_j^*g_j}\right)^2\right)\mu^*}{(1-\lambda^*)\left(\sum_{j\in\mathcal A}\sqrt{p_j^*g_j}\right)}\triangleq D,~\forall i,
\end{equation}
which shows that the the right side of the formula is the same for each $i$ and thus is a constant defined as $D$. This means that $\sqrt{g_i/p_i^*}\leq D$ should be satisfied for each $i$, and then implies that $p_i^*>0$ (or $p_i^*\neq0$) for each $i$ and the equality in \eqref{eqn:D} always holds. Thus, we have

\begin{equation}\label{eqn:p1}
p_i^*=\frac{g_i}{D^2},~\forall i.
\end{equation}

Moreover, according to the complementary slackness \cite{Boyd}, we have

\begin{equation}
\mu^*\left[\sum_{i\in\mathcal A}\zeta Ph_i(1-\alpha_i^*)-\sum_{i\in\mathcal A}p_i^*\right].
\end{equation}
As $\mu^*\neq0$ at the optimum, it must be

\begin{equation}\label{eqn:p2}
\sum_{i\in\mathcal A}p_i^*=\sum_{i\in\mathcal A}\zeta Ph_i(1-\alpha_i^*)\triangleq P_R,
\end{equation}
where $P_R$ is the total harvested power of the relay.

%
%Without loss of generality, we sort the channel gains  in an increasing order as $g_1\leq g_2\leq\cdots\leq g_{N-1}\leq g_N$, then the optimality condition becomes $g_N\leq D$. We discuss the condition by the following two cases.
%%
%
%\begin{enumerate}
%\item If $g_N=D$, only one non-zero power allocation, i.e., $p_N^*>0$ and $p_{N-1}^*=\cdots=p_2^*=p_1^*=0$. Substituting these values to \eqref{eqn:D}, we obtain
%
%\begin{equation}
%p_N^*=\frac{1-\lambda^*}{\delta\mu^*}-\frac{1}{g_N}.
%\end{equation}
%
%\item If $g_N<D$, $p_i^*=0$ for all $i$.
%
%\end{enumerate}

Combining \eqref{eqn:p1} and \eqref{eqn:p2}, the optimal relay power allocation \eqref{eqn:optp} is obtained.

\bibliographystyle{IEEEtran}
\bibliography{IEEEabrv,mimorelay}

%\begin{IEEEbiography}
%[{\includegraphics[width=1in,height=1.25in,clip,keepaspectratio]{Yuan_Liu_photo.eps}}]
%{Yuan Liu} (S'11-M'13) received the B.S. degree from Hunan University of Science and Technology, Xiangtan, China, the M.S. degree from Guangdong University of Technology, Guangzhou, China, and the	Ph.D. degree from Shanghai Jiao Tong University, Shanghai, China, all in electronic engineering, in	2006, 2009, and 2013, respectively. Since fall 2013, he has been an Assistant Professor with South China University of Technology, Guangzhou, China. His research interests include wireless communications and networking. 
%%	
%%	He has been honored as an Exemplary Reviewer of the \textsc{IEEE Communications Letters}. He was the recipient of the Guangdong Province Excellent Master Theses Award in 2010. He is also the recipient of the IEEE Student Travel Grant for IEEE ICC 2012.
%\end{IEEEbiography}

\end{document}